\newcommand{\BoundedSAT}{\ensuremath{\mathsf{BoundedSAT}}}
\newcommand{\UniformWitness}{\ensuremath{\mathsf{UniWit}}}
\newcommand{\killthis}[1]{}
\newcommand{\prob}{\ensuremath{\mathsf{Pr}}}
\newcommand{\expect}{\ensuremath{\mathsf{E}}}
\newcommand{\mc}[1]{\ensuremath{\mathcal{#1}}}
\newcommand{\Relsat}{\ensuremath{\mathsf{Relsat}}}
\newcommand{\NP}{\ensuremath{\mathsf{NP}}}
\newcommand{\SAT}{\ensuremath{\mathsf{SAT}}}
\newcommand{\sharpSATTool}{\ensuremath{\mathsf{sharpSAT}}}
\newcommand{\Cachet}{\ensuremath{\mathsf{Cachet}}}
\newcommand{\ApproxCount}{\ensuremath{\mathsf{ApproxCount}}}
\newcommand{\SampleCount}{\ensuremath{\mathsf{SampleCount}}}
\newcommand{\BPCount}{\ensuremath{\mathsf{BPCount}}}
\newcommand{\MBound}{\ensuremath{\mathsf{MBound}}}
\newcommand{\CDP}{\ensuremath{\mathsf{CDP}}}
\newcommand{\HybridMBound}{\ensuremath{\mathsf{Hybrid}}-\ensuremath{\mathsf{MBound}}}
\newcommand{\MiniCount}{\ensuremath{\mathsf{MiniCount}}}
\newcommand{\sharpSAT}{\ensuremath{\#\mathsf{SAT}}}
\newcommand{\sharpP}{\ensuremath{\#\mathsf{P}}}
\newcommand{\approxMC}{\ensuremath{\mathsf{ApproxMC}}}
\newcommand{\ApproxMC}{\ensuremath{\mathsf{ApproxMC}}}
\newcommand{\ApproxMCCore}{\ensuremath{\mathsf{ApproxMCCore}}}
\newcommand{\ComputeThreshold}{\ensuremath{\mathsf{ComputeThreshold}}}
\newcommand{\ComputeIterCount}{\ensuremath{\mathsf{ComputeIterCount}}}
\newcommand{\SearchTreeSampler}{\ensuremath{\mathsf{SearchTreeSampler}}}
\newcommand{\SE}{\ensuremath{\mathsf{SE}}}
\newcommand{\SampleSearch}{\ensuremath{\mathsf{SampleSearch}}}
\newcommand{\JVV}{\ensuremath{\mathsf{JVV}}}
\newcommand{\added}[1]{#1}
\begin{document}
\makeatletter
\renewcommand*{\@fnsymbol}[1]{\ensuremath{\ifcase#1\or *\or \dagger\or \ddagger\or
   \mathsection\or \mathparagraph\or \|\or **\or \dagger\dagger
   \or \ddagger\ddagger \else\@ctrerr\fi}}
\makeatother
\title{A Scalable Approximate Model Counter\thanks{Authors would like to thank Henry Kautz and  Ashish Sabhrawal for their valuable help in experiments, and  Tracy Volz for valuable comments on the earlier drafts.  Work supported in part by NSF grants CNS 1049862 and CCF-1139011,
by NSF Expeditions in Computing project ``ExCAPE: Expeditions in Computer
Augmented Program Engineering," by BSF grant 9800096, by a gift from
Intel, by a grant from Board of Research in Nuclear Sciences, India, and by the Shared University Grid at Rice funded by NSF under Grant EIA-0216467, and a partnership between Rice University, Sun Microsystems, and Sigma Solutions, Inc.
} \thanks{A longer version of this paper is available at \url{http://www.cs.rice.edu/CS/Verification/Projects/ApproxMC/}}}
\author{Supratik Chakraborty\inst{1} \and Kuldeep S. Meel\inst{2} \and Moshe Y. Vardi\inst{2}}
\institute{Indian Institute of Technology Bombay, India
\and
Department of Computer Science, Rice University}
\date{Jan.~29, 2013}

\maketitle
\begin{abstract}
\emph{Propositional model counting} ({\sharpSAT}), i.e., counting the
number of satisfying assignments of a propositional formula, is a
problem of significant theoretical and practical interest.  Due to the
inherent complexity of the problem, \emph{approximate model counting},
which counts the number of satisfying assignments to within given
tolerance and confidence level, was proposed as a practical
alternative to exact model counting. Yet, approximate model counting
has been studied essentially only theoretically.  The only reported
implementation of approximate model counting, due to Karp and Luby,
worked only for DNF formulas.  A few existing tools for CNF formulas
are \emph{bounding model counters}; they can handle realistic problem
sizes, but fall short of providing counts within given tolerance and
confidence, and, thus, are not approximate model counters.

We present here a novel algorithm, as well as a reference implementation,
that is the first scalable approximate model counter for CNF formulas.
The algorithm works by issuing a polynomial number of calls to a
SAT solver. Our tool, {\approxMC}, scales to formulas with tens of thousands of 
variables.  Careful experimental comparisons show that {\approxMC}
reports, with high confidence, bounds that are close to the exact count,
and also succeeds in reporting bounds with small tolerance
and high confidence in cases that are too large for computing exact
model counts.
\end{abstract}

\section{Introduction}\label{sec:intro}
Propositional model counting, also known as {\sharpSAT}, concerns 
counting the number of models (satisfying truth assignments) of a
given propositional formula.  This problem has been the subject of
extensive theoretical investigation since its introduction by
Valiant~\cite{valiant1979complexity} in 1979.  Several interesting 
applications of {\sharpSAT} have been studied in the context of 
probabilistic reasoning, planning, combinatorial design and other 
related fields~\cite{Roth1996,Bacchus04algorithmsand,domshlak2007}.  
In particular, probabilistic reasoning and inferencing have attracted
considerable interest in recent years~\cite{GSS08}, and stand to
benefit significantly from efficient propositional model counters.
Theoretical investigations of {\sharpSAT} have led to the discovery of 
deep connections in complexity theory 
\cite{Angluin1980,simon77,toda1989computational}: 
{\sharpSAT} is {\sharpP}-complete, where {\sharpP} is the set of
counting problems associated with decision problems in the complexity
class {\NP}.  Furthermore, $\mathsf{P}^{\sharpSAT}$, that is,
a polynomial-time machine with a {\sharpSAT} oracle, can solve all 
problems in the entire polynomial hierarchy. In fact, the 
polynomial-time machine only needs to make one {\sharpSAT} query to 
solve any problem in the polynomial hierarchy. This is strong evidence 
for the hardness of {\sharpSAT}.

In many applications of model counting, such as in probabilistic
reasoning, the exact model count may not be critically important, and
approximate counts are sufficient.  Even when exact model counts are
important, the inherent complexity of the problem may force one to
work with approximate counters in practice.  In~\cite{Stockmeyer83},
Stockmeyer showed that counting models within a specified tolerance
factor can be achieved in deterministic polynomial time using a
$\Sigma_2^p$-oracle.  Karp and Luby presented a fully polynomial
randomized approximation scheme for counting models of a DNF formula
\cite{KarpLuby1989}.  Building on Stockmeyer's result, Jerrum, Valiant 
and Vazirani~\cite{Jerr} showed that counting models of CNF formulas
within a specified tolerance factor can be solved in random polynomial 
time using an oracle for {\SAT}.  

On the implementation front, the earliest approaches to {\sharpSAT}
were based on DPLL-style {\SAT} solvers and computed exact counts.
These approaches consisted of incrementally counting the number of
solutions by adding appropriate multiplication factors after a partial
solution was found.  This idea was formalized by Birnbaum and
Lozinkii~\cite {Birnbaum1999} in their model counter {\CDP}.  Subsequent
model counters such as {\Relsat}~\cite{Bayardo97usingcsp},
{\Cachet}~\cite{Sang04combiningcomponent} and
{\sharpSATTool}~\cite{Thurley2006} improved upon this idea by using
several optimizations such as component caching, clause learning,
look-ahead and the like.  Techniques based on Boolean Decision
Diagrams and their variants~\cite{minato93,lobbing1996}, or d-DNNF
formulae~\cite{darwiche2004new}, have also been used to compute exact
model counts.  Although exact model counters have been successfully
used in small- to medium-sized problems, scaling to larger problem
instances has posed significant challenges in practice.  Consequently,
a large class of practical applications has remained beyond the reach
of exact model counters.

To counter the scalability challenge, more efficient techniques for
counting models approximately have been proposed.  These counters can
be broadly divided into three categories.  Counters in the first
category are called $(\varepsilon, \delta)$ counters, following Karp
and Luby's terminology~\cite{KarpLuby1989}.  Let $\varepsilon$ and
$\delta$ be real numbers such that $ 0 < \varepsilon \leq 1$ and $0 <
\delta \le 1$.  For every propositional formula $F$ with $\#F$ models,
an $(\varepsilon, \delta)$ counter computes a number that lies in the
interval $[(1+\varepsilon)^{-1}\#F, (1+\varepsilon)\#F]$ with
probability at least $1-\delta$.  We say that $\varepsilon$ is the
\emph{tolerance} of the count, and $1-\delta$ is its
\emph{confidence}.  The counter described in this paper and also that
due to Karp and Luby~\cite{KarpLuby1989} belong to this category.  The
approximate-counting algorithm of Jerrum et al.~\cite{Jerr} also
belongs to this category; however, their algorithm does not lend
itself to an implementation that scales in practice.  Counters in the
second category are called \emph{lower (or upper) bounding counters},
and are parameterized by a confidence probability $1-\delta$.  For every
propositional formula $F$ with $\#F$ models, an upper (resp., lower)
bounding counter computes a number that is at least as large (resp.,
as small) as $\#F$ with probability at least $1 - \delta$.  Note that
bounding counters \emph{do not} provide any tolerance guarantees.  The
large majority of approximate counters used in practice are bounding
counters.  Notable examples include
{\SampleCount}~\cite{gomes2007sampling},
{\BPCount}~\cite{KrocSabSel2008}, {\MBound} (and
{\HybridMBound})~\cite{gomes2006model}, and
{\MiniCount}~\cite{KrocSabSel2008}.  The final category of counters is
called \emph{guarantee-less counters}.  These counters provide no
guarantees at all but they can \added{be} very efficient and provide good
approximations in practice.  \added{Examples of guarantee-less counters include
{\ApproxCount}~\cite{wei2005new}, {\SearchTreeSampler}~\cite{ErGoSel2012}, {\SE}~\cite{Rubin2012} and {\SampleSearch}~\cite{GogDech2011}.}

Bounding both the tolerance and confidence of approximate model counts
is extremely valuable in applications like probabilistic inference.
Thus, designing $(\varepsilon, \delta)$ counters that scale to
practical problem sizes is an important problem.  Earlier work on
$(\varepsilon, \delta)$ counters has been restricted largely to
theoretical treatments of the problem.  The only counter in this
category that we are aware of as having been implemented is due to
Karp and Luby~\cite{LubyThesis83}.  Karp and Luby's original
implementation was designed to estimate reliabilities of networks with
failure-prone links.  However, the underlying Monte Carlo engine can
be used to approximately count models of DNF, \emph{but not CNF},
formulas.

The counting problems for both CNF and DNF formulae are
{\sharpP}-complete. While the DNF representation suits some
applications, most modern applications of model counting (e.g.
probabilistic inference) use the CNF representation.  Although
\emph{exact} counting for DNF and CNF formulae are polynomially
inter-reducible, there is no known polynomial reduction for the
corresponding \emph{approximate} counting problems.  In fact, Karp and
Luby remark in~\cite{KarpLuby1989}  that it is highly unlikely that their
randomized approximate algorithm for DNF formulae can be adapted to
work for CNF formulae.  Thus, there has been no prior implementation
of $(\varepsilon, \delta)$ counters for CNF formulae \emph{that scales in
practice}.  In this paper, we present the first such counter. As in
\cite{Jerr}, our algorithm runs in random polynomial time using an
oracle for {\SAT}. Our extensive experiments show that our algorithm
scales, with low error, to formulae arising from several application domains involving tens of thousands of
variables.

The organization of the paper is as follows. We present preliminary
material in Section \ref{sec:prelims}, and related work in Section
\ref{sec:relatedwork}. In Section \ref{sec:algo}, we present our
algorithm, followed by its analysis in Section \ref{sec:analysis}.
Section \ref{sec:experiment} discusses our experimental methodology,
followed by experimental results in Section \ref{sec:results}. Finally, we conclude in Section \ref{sec:discussion}.

\section{Notation and Preliminaries}\label{sec:prelims}

Let $\Sigma$ be an alphabet and $R \subseteq \Sigma^* \times \Sigma^*$
be a binary relation.  We say that $R$ is an \mc{NP}-relation if $R$
is polynomial-time decidable, and if there exists a polynomial
$p(\cdot)$ such that for every $(x, y) \in R$, we have $|y| \le
p(|x|)$.  Let $L_R$ be the language $\{x \in \Sigma^* \mid \exists y
\in \Sigma^*,\, (x, y) \in R\}$.  The language $L_R$ is said to be in
\mc{NP} if $R$ is an \mc{NP}-relation.  The set of all satisfiable
propositional logic formulae in CNF is a language in \mc{NP}.  
Given $x \in L_R$, a \emph{witness} or \emph{model} of $x$
is a string $y \in \Sigma^*$ such that $(x, y) \in R$.  The set of all
models of $x$ is denoted $R_x$.  For notational convenience,
fix $\Sigma$ to be $\{0, 1\}$ without loss of generality.  If $R$ is
an \mc{NP}-relation, we may further assume that for every $x \in L_R$,
every witness $y \in R_x$ is in $\{0, 1\}^n$, where $n = p(|x|)$ for
some polynomial $p(\cdot)$.

Let $R \subseteq \{0,1\}^* \times \{0,1\}^*$ be an \mc{NP} relation.
The \emph{counting problem} corresponding to $R$ asks ``Given $x \in
\{0,1\}^*$, what is $|R_x|$?''.  If $R$ relates CNF propositional
formulae to their satisfying assignments, the corresponding counting
problem is called {\sharpSAT}.  The primary focus of this paper is on
$(\varepsilon, \delta)$ counters for {\sharpSAT}.  The randomized
$(\varepsilon, \delta)$ counters of Karp and Luby~\cite{KarpLuby1989}
for DNF formulas are \emph{fully polynomial}, which means that
they run in time \added{polynomial} in the size of the input 
formula $F$, $1/\varepsilon$ and $\log(1/\delta)$.  The randomized 
$(\varepsilon, \delta)$ counters for CNF formulas in \cite{Jerr}
and in this paper are however fully polynomial \emph{with respect to a SAT oracle}.

A special class of hash functions, called \emph{$r$-wise independent} 
hash functions, play a crucial role in our work.
Let $n, m$ and $r$ be positive integers, and let $H(n,m,r)$ denote a
family of $r$-wise independent hash functions mapping $\{0, 1\}^n$ to $\{0, 1\}^m$.  We use 
$\prob\left[X: {\cal P} \right]$ to denote the probability of outcome 
$X$ when sampling from a probability space ${\cal P}$, and 
$h \xleftarrow{R} H(n,m,r)$ to denote the probability space obtained by
choosing a hash function $h$ uniformly at random from $H(n,m,r)$. The 
property of $r$-wise independence guarantees 
that for all $\alpha_1, \ldots \alpha_r \in \{0,1\}^m $ and
for all distinct $y_1, \ldots y_r \in \{0,1\}^n$, 
$\prob\left[\bigwedge_{i=1}^r h(y_i) = \alpha_i\right.$ $\left.: 
h \xleftarrow{R} H(n, m, r)\right] = 2^{-mr}$.  For every $\alpha \in
\{0, 1\}^m$ and $h \in H(n, m, r)$, let $h^{-1}(\alpha)$ denote the
set $\{y \in \{0, 1\}^n \mid h(y) = \alpha\}$.  Given $R_x \subseteq
\{0, 1\}^n$ and $h \in H(n, m, r)$, we use $R_{x, h, \alpha}$ to
denote the set $R_x \cap h^{-1}(\alpha)$.  If we keep $h$ fixed and
let $\alpha$ range over $\{0, 1\}^m$, the sets $R_{x, h, \alpha}$ form
a partition of $R_x$.  Following the notation in ~\cite{Bellare98uniformgeneration}, we call each element of
such a \added{partition} a \emph{cell} of $R_x$ \emph{induced} by $h$.  It was 
shown in~\cite{Bellare98uniformgeneration} that if $h$ is chosen
uniformly at random from $H(n, m,r)$ for $r \ge 1$, then the expected 
size of $R_{x,h,\alpha}$, 
denoted $\expect\left[|R_{x,h,\alpha}|\right]$,
is $|R_x|/2^m$, for each $\alpha \in \{0, 1\}^m$.

 The specific family of hash functions used in our work, denoted
 $H_{xor}(n, m, 3)$, is based on randomly choosing bits from $y \in
 \{0, 1\}^n$ and xor-ing them. This family of hash functions has been
 used in earlier work ~\cite{gomes2006model}, and has been shown to be
 3-independent in ~\cite{Gomes-Sampling}.  Let $h(y)[i]$ denote the
 $i^{th}$ component of the bit-vector obtained by applying hash
 function $h$ to $y$.  The family $H_{xor}(n, m, 3)$ is defined as
 $\{h(y) \mid (h(y))[i] = a_{i,0} \oplus (\bigoplus_{k=1}^n
 a_{i,k}\cdot y[k]), a_{i,j} \in \{0, 1\}, 1 \leq i \leq m, 0 \leq j
 \leq n\}$, where $\oplus$ denotes the xor operation.  By randomly
 choosing the $a_{i,j}$'s, we can randomly choose a hash function from
 this family.

\section{Related Work}\label{sec:relatedwork}

Sipser pioneered a hashing based approach in ~\cite{Sipser83}, which
has subsequently been used in theoretical
~\cite{trevisan2002lecture,Bellare98uniformgeneration} and practical
~\cite{Gomes-Sampling,gomes2006model,SKV13} treatments of approximate counting and
(near-)uniform sampling.
Earlier implementations of counters that use the hashing-based
approach are {\MBound} and {\HybridMBound}~\cite{gomes2006model}.
Both these counters use the same family of hashing functions, i.e.,
$H_{xor}(n, m, 3)$, that we use.  Nevertheless, there are significant
differences between our algorithm and those of {\MBound} and
{\HybridMBound}.  Specifically, we are able to exploit properties of
the $H_{xor}(n, m, 3)$ family of hash functions to obtain a fully
polynomial $(\varepsilon, \delta)$ counter with respect to a SAT
oracle.  In contrast, both {\MBound} and {\HybridMBound} are bounding
counters, and cannot provide bounds on tolerance.  In addition, our
algorithm requires no additional parameters beyond the tolerance
$\varepsilon$ and confidence $1-\delta$.  In contrast, the performance
and quality of results of both {\MBound} and {\HybridMBound}, depend
crucially on some hard-to-estimate parameters.  It has been our
experience that the right choice of these parameters is often domain
dependent \added{and difficult}.

Jerrum, Valiant and Vazirani~\cite{Jerr} showed that if $R$ is a 
self-reducible \mc{NP} relation (such as SAT),
the problem of generating models \emph{almost uniformly} is 
polynomially inter-reducible with approximately counting models.  The 
notion of almost uniform generation requires that if $x$ 
is a problem instance, then for every $y \in R_x$, we have $(1
+ \varepsilon)^{-1}\varphi(x)$ $\le$ $\prob[y \mbox{ is generated}]$
$\le$ $(1 + \varepsilon)\varphi(x)$, where 
$\varepsilon > 0$ is the specified tolerance and $\varphi(x)$ is an
appropriate function. Given an almost 
uniform generator $\mc{G}$ for $R$, an input $x$, a tolerance bound
$\varepsilon$ and an error probability bound $\delta$, it is shown
in \cite{Jerr} that one can obtain an $(\varepsilon, \delta)$
counter for $R$ by invoking $\mc{G}$ polynomially (in $|x|$,
$1/\varepsilon$ and $\log_2(1/\delta)$) many times, and by using the
generated samples to estimate $|R_x|$.  For convenience of exposition,
we refer to this approximate-counting algorithm as the {\JVV}
algorithm (after the last names of the authors).

An important feature of the {\JVV} algorithm is that it uses the
almost uniform generator $\mc{G}$ as a black box.  Specifically, the
details of how $\mc{G}$ works is of no consequence.  Prima facie, this
gives us freedom in the choice of $\mc{G}$ when implementing the
{\JVV} algorithm.  Unfortunately, while there are theoretical 
constructions of uniform 
generators in~\cite{Bellare98uniformgeneration},
we are not aware of any implementation of an almost uniform generator 
that scales to CNF formulas involving thousands of variables.
The lack of a scalable and almost uniform generator presents
a significant hurdle in implementing the {\JVV} algorithm for
practical applications.
It is worth asking if we can make the {\JVV} algorithm work without
requiring $\mc{G}$ to be an almost uniform generator.  A closer look
at the proof of correctness of the {\JVV} algorithm~\cite{Jerr} shows
it relies crucially on the ability of $\mc{G}$ to ensure that the
probabilities of generation of any two distinct models of $x$ differ
by a factor in $O(\varepsilon^2)$.  As discussed in~\cite{SKV13},
existing algorithms for randomly generating models either provide this
guarantee but scale very poorly in practice (e.g., the algorithms
in~\cite{Bellare98uniformgeneration,Yuan2004}), or scale well in
practice without providing the above guarantee (e.g., the algorithms
in~\cite{SKV13,Gomes-Sampling,KitKue2007}).  Therefore, using an
existing generator as a black box in the {\JVV} algorithm would not
give us an $(\varepsilon, \delta)$ model counter that scales in
practice. %
The primary contribution of this paper is to show that a scalable
$(\varepsilon, \delta)$ counter can indeed be designed by using the
same insights that went into the design of a \emph{near uniform}
generator, {\UniformWitness}~\cite{SKV13}, but without using the
generator as a black box in \added{the} approximate counting
algorithm.  Note that near uniformity, as defined in~\cite{SKV13}, is
an even more relaxed notion of uniformity than almost uniformity.  We
leave the question of whether a near uniform generator can be used as
a black box to design an $(\varepsilon, \delta)$ counter as part of
future work.

 The central idea of {\UniformWitness}, which is also shared by our
 approximate model counter, is the use of $r$-wise independent hashing
 functions to randomly partition the space of all models of a given
 problem instance into ``small'' cells. This idea was first proposed
 in~\cite{Bellare98uniformgeneration}, but there are two novel
 insights that allow {\UniformWitness}~\cite{SKV13} to scale better
 than other hashing-based sampling
 algorithms~\cite{Bellare98uniformgeneration,Gomes-Sampling}, while
 still providing guarantess on the quality of sampling.  These
 insights are: (i) the use of computationally efficient linear hashing
 functions with low degrees of independence, and (ii) a drastic
 reduction in the size of ``small'' cells, from $n^2$
 in~\cite{Bellare98uniformgeneration} to $n^{1/k}$ (for $2 \le k \le
 3$) in~\cite{SKV13}, and even further to a constant in the current
 paper. %
We continue to use these key insights in the design of our approximate model counter,
although {\UniformWitness} is not used explicitly in the model counter.

\section{Algorithm} \label{sec:algo}
We now describe our approximate model counting algorithm, called
{\ApproxMC}.  %
As mentioned above, we use $3$-wise independent linear hashing
functions from the $H_{xor}(n, m, 3)$ family, for an appropriate $m$,
to randomly partition the set of models of an input formula into
``small'' cells.  In order to test whether the generated cells are
indeed small, we choose a random cell and check if it is non-empty and
has no more than $pivot$ elements, where $pivot$ is a threshold that
depends only on the tolerance bound $\varepsilon$.  If the chosen cell
is not small, we randomly partition the set of models into twice as
many cells as before by choosing a random hashing function from the
family $H_{xor}(n, m+1, 3)$.
The above procedure is repeated until either a randomly
chosen cell is found to be non-empty and small, or the number of
cells exceeds $\frac{2^{n+1}}{\mathit{pivot}}$.  If all cells
that were randomly chosen during the above process were either empty
or not small, we report a counting failure and return $\bot$.
Otherwise, the size of the cell last chosen is scaled by the number
of cells to obtain an $\varepsilon$-approximate estimate of the model count.  

The procedure outlined above forms the core engine of {\ApproxMC}. For
convenience of exposition, we implement this core engine as a function
{\ApproxMCCore}.  The overall {\ApproxMC} algorithm simply invokes
{\ApproxMCCore} sufficiently many times, and returns the median of the
non-$\bot$ values returned by {\ApproxMCCore}.  The pseudocode for
algorithm {\ApproxMC} is shown below.

\begin{tabular}{lccl}
\begin{minipage}{0.5\textwidth}
\begin{tabbing}	
xx \= xx \= xx \= xx \= xx \= \kill
\\
\noindent {\bfseries \textsf{Algorithm} ${\approxMC} (F, \varepsilon,\delta)$}\\
1:\> $\mathit{counter} \leftarrow 0; C \leftarrow \mathsf{emptyList}$;\\
2:\> $\mathit{pivot} \leftarrow 2 \times {\ComputeThreshold}(\varepsilon)$; \\
3:\> $t \leftarrow  {\ComputeIterCount}(\delta)$;\\
4:\> {\bfseries repeat}:\\
5:\> \> $c \leftarrow {\ApproxMCCore} (F, \mathit{pivot})$;\\
6:\> \> $\mathit{counter} \leftarrow \mathit{counter}+1$;\\
7:\> \> {\bfseries if} $(c \neq \bot)$ \\
8:\> \> \>$\mathsf{AddToList}(C, c)$;\\
9:\> {\bfseries until} ($\mathit{counter} < t$);\\
10:\> $\mathit{finalCount} \leftarrow \mathsf{FindMedian}(C)$;\\
11:\>{\bfseries return} $\mathit{finalCount}$; \\ 
\end{tabbing} 
\end{minipage} & & &
\begin{minipage}{0.5\textwidth}
\begin{tabbing}	
xx \= xx \= xx \= xx \= xx \= \kill
\\
\noindent {\bfseries \textsf{Algorithm} ${\ComputeThreshold} (\varepsilon)$}\\
1:\> {\bfseries return} $\left\lceil 3e^{1/2}\left(1 + \frac{1}{\varepsilon}\right)^2 \right\rceil$;\\
\\
\\
\noindent {\bfseries \textsf{Algorithm} ${\ComputeIterCount} (\delta)$}\\
1:\> {\bfseries return} $\left\lceil 35\log_2 (3/\delta) \right\rceil$;\\
\\
\\
\\
\\
\\
\end{tabbing}
\end{minipage}\\
\end{tabular}

\noindent
Algorithm {\ApproxMC} takes as inputs a CNF formula $F$, a tolerance
$\varepsilon ~(0 < \varepsilon \le 1)$ and $\delta ~(0 < \delta \le
1)$ such that the desired confidence is $1-\delta$.  It computes two
key parameters: (i) a threshold $\mathit{pivot}$ that depends only on
$\varepsilon$ and is used in {\ApproxMCCore} to determine the size of
a ``small'' cell, and (ii) a parameter $t ~(\ge 1)$ that depends only
on $\delta$ and is used to determine the number of times
{\ApproxMCCore} is invoked. The particular choice of functions to
compute the parameters $\mathit{pivot}$ and $t$ aids us in proving
theoretical guarantees for {\ApproxMC} in Section
\ref{sec:analysis}. Note that $\mathit{pivot}$ is in
$\mathcal{O}(1/\varepsilon^2)$ and $t$ is in $\mathcal{O}(\log_2 (1/\delta))$.  All non-$\bot$
estimates of the model count returned by {\ApproxMCCore} are stored in
the list $C$.  The function $\mathsf{AddToList}(C, c)$ updates the
list $C$ by adding the element $c$.  The final estimate of the model
count returned by {\ApproxMC} is the median of the estimates stored in
$C$, computed using $\mathsf{FindMedian}(C)$.  We assume that if the list $C$ is
empty, $\mathsf{FindMedian}(C)$ returns $\bot$.

The pseudocode for algorithm {\ApproxMCCore} is shown below.
\begin{tabbing} xx \= xx \= xx \= xx \= xx \= \kill
\noindent {\bfseries \textsf{Algorithm} ${\ApproxMCCore} (F, pivot)$}\\ 
/* Assume $z_1, \ldots z_n$ are the variables of $F$    */ \\
1:\> $S \leftarrow \BoundedSAT(F, \mathit{pivot} +1)$;\\
2:\> {\bfseries if} ($|S| \le \mathit{pivot}$)\\
3:\> \> return $|S|$; \\
4:\> {\bfseries else} \\
5:\> \> $l \leftarrow \lfloor \log_2 (\mathit{pivot}) \rfloor - 1$; $i \leftarrow l - 1$;\\
6:\> \> {\bfseries repeat}\\
7:\> \> \> $i \leftarrow i+1$;\\
8:\> \> \> Choose $h$ at random from $H_{xor}(n, i-l, 3)$; \\
9:\> \> \> Choose $\alpha$ at random from $\{0, 1\}^{i-l}$;\\
10:\> \> \> $S \leftarrow \BoundedSAT(F \wedge (h(z_1, \ldots z_n) = \alpha), \mathit{pivot}+1)$;\\
11:\> \> {\bfseries until} ($1 \le |S| \le \mathit{pivot}$) or ($i = n$);\\
12:\> \> {\bfseries if} ($|S| > \mathit{pivot}$ {\bfseries or} $|S| = 0$) {\bfseries return} $\bot$  ;\\
13:\> \> {\bfseries else return }  $|S| \cdot 2^{i-l}$;\\
\end{tabbing}
Algorithm {\ApproxMCCore} takes as inputs a CNF formula $F$ and a
threshold $pivot$, and returns an $\varepsilon$-approximate estimate
of the model count of $F$.  We assume that {\ApproxMCCore} has access
to a function {\BoundedSAT} that takes as inputs a proposition formula
$F'$ that is the conjunction of a CNF formula and xor constraints, as
well as a threshold \added{$v \ge 0$.  {\BoundedSAT}$(F', v)$} returns
a set $S$ of models of $F'$ such that $|S| = \min(\added{v}, \#F')$.
If the model count of $F$ is no larger than $\mathit{pivot}$, then
{\ApproxMCCore} returns the exact model count of $F$ in line $3$ of
the pseudocode.  Otherwise, it partitions the space of all models of
$F$ using random hashing functions from $H_{xor}(n, i-l, 3)$ and
checks if a randomly chosen cell is non-empty and has at most
$\mathit{pivot}$ elements.  Lines $8$--$10$ of the repeat-until loop
in the pseudocode implement this functionality.  The loop terminates
if either a randomly chosen cell is found to be small and non-empty,
or if the number of cells generated exceeds
$\frac{2^{n+1}}{\mathit{pivot}}$ (if $i=n$ in line $11$, the number of
cells generated is $2^{n-l} \ge \frac{2^{n+1}}{\mathit{pivot}}$).  In
all cases, unless the cell that was chosen last is empty or not small,
we scale its size by the number of cells generated by the
corresponding hashing function to compute an estimate of the model
count.  If, however, all randomly chosen cells turn out to be empty or
not small, we report a counting error by returning $\bot$.

\noindent 
{\bfseries Implementation issues:} There are two steps in
algorithm {\ApproxMCCore} (lines 8 and 9 of the pseudocode) where
random choices are made.  Recall from
Section~\ref{sec:prelims} that choosing a random hash
function from $H_{xor}(n, m, 3)$ requires choosing random
bit-vectors.  It is straightforward to implement these
choices and also the choice of a random $\alpha \in \{0, 1\}^{i-l}$ 
in line 9 of the pseudocode, if we have access to
a source of independent and uniformly distributed random bits.  
Our implementation uses pseudo-random sequences of bits generated from
nuclear decay processes and made available at HotBits~\cite{HotBits}.
We download and store a sufficiently long sequence of random bits in a
file, and access an appropriate number of bits sequentially whenever
needed. We defer experimenting with sequences of bits obtained from
other pseudo-random generators to a future study.

In lines 1 and 10 of the pseudocode for algorithm {\ApproxMCCore}, we
invoke the function {\BoundedSAT}.  Note that if $h$ is chosen
randomly from $H_{xor}(n, m, 3)$, the formula for which we seek models
is the conjunction of the original (CNF) formula and xor constraints
encoding the inclusion of each witness in $h^{-1}(\alpha)$.  We
therefore use a SAT solver optimized for conjunctions of xor
constraints and CNF clauses as the back-end engine.  Specifically, we
use CryptoMiniSAT (version 2.9.2)~\cite{CryptoMiniSAT}, which also
allows passing a parameter indicating the maximum number of witnesses
to be generated.

Recall that {\ApproxMCCore} is invoked $t$ times with the same
arguments in algorithm {\ApproxMC}.  Repeating the loop of lines 6--11
in the pseudocode of {\ApproxMCCore} in each invocation can be time
consuming if the values of $i-l$ for which the loop terminates are
large.  In~\cite{SKV13}, a heuristic called \emph{leap-frogging} was
proposed to overcome this bottleneck in practice.  With leap-frogging,
we register the smallest value of $i-l$ for which the loop terminates
during the first few invocations of {\ApproxMCCore}.  In all
subsequent invocations of {\ApproxMCCore} with the same arguments, we
start iterating the loop of lines 6--11 by initializing $i-l$ to the
smallest value registered from earlier invocations.  Our experiments
indicate that leap-frogging is extremely efficient in practice and
leads to significant savings in time after the first few invocations
of {\ApproxMCCore}. \added{A theoretical analysis of leapfrogging is
  deferred to future work.}

\section{Analysis of {\ApproxMC}}\label{sec:analysis}

The following result, a minor variation of Theorem~5 
in~\cite{Schmidt}, about Chernoff-Hoeffding bounds plays an important
role in our analysis.
\begin{theorem}\label{theorem:chernoff-hoeffding}
Let $\Gamma$ be the sum of $r$-wise independent random variables, each
of which is confined to the interval $[0, 1]$, and suppose
$\expect[\Gamma] = \mu$.  For $0 < \beta \le 1$, if $r \le \left\lfloor
\beta^{2}\mu e^{-1/2} \right\rfloor \leq 4$ , then $\prob\left[\,|\Gamma - \mu| \ge
  \beta\mu\,\right] \le e^{-r/2}$.
\end{theorem}

Let $F$ be a CNF propositional formula with $n$ variables. The next
two lemmas show that algorithm {\ApproxMCCore}, when invoked from
{\ApproxMC} with arguments $F$, $\varepsilon$ and $\delta$, behaves
like an $(\varepsilon, d)$ model counter for $F$, for a fixed confidence
$1-d$ (possibly different from $1-\delta$).  Throughout this
section, we use the notations $R_F$ and $R_{F,h,\alpha}$ introduced in
Section~\ref{sec:prelims}.

\begin{lemma}\label{lm:probProof}
 Let algorithm {\ApproxMCCore}, when invoked from {\ApproxMC}, return
 $c$ with $i$ being the final value of the loop counter in {\ApproxMCCore}.  
Then, $\prob\left[(1 + \varepsilon)^{-1}\cdot |R_F| \le c \le (1 +
   \varepsilon)\cdot |R_F| \Bigm|  c \neq \bot \mbox{ and } i \leq \log_2 |R_F|\right]$ $\ge 1 - e^{-3/2}$.
\end{lemma}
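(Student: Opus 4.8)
The plan is to reduce the claim to a concentration bound on the size of the cell chosen at the terminal iteration, and then invoke Theorem~\ref{theorem:chernoff-hoeffding} with $r=3$. Condition on the stated event and let $m = i-l$ denote the final hash dimension, so the hash $h \in H_{xor}(n,m,3)$ and point $\alpha \in \{0,1\}^m$ chosen in the last iteration satisfy $c = |R_{F,h,\alpha}|\cdot 2^m$ (line~13), with $1 \le |R_{F,h,\alpha}| \le \mathit{pivot}$ since $c \neq \bot$. Writing $\Gamma = |R_{F,h,\alpha}|$ and $\mu = |R_F|/2^m$, the target event $(1+\varepsilon)^{-1}|R_F| \le c \le (1+\varepsilon)|R_F|$ becomes, after dividing by $2^m$, exactly $(1+\varepsilon)^{-1}\mu \le \Gamma \le (1+\varepsilon)\mu$. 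Setting $\beta = \varepsilon/(1+\varepsilon)$, I would first record the elementary implication that $|\Gamma-\mu|\le\beta\mu$ forces $\Gamma$ into $[(1+\varepsilon)^{-1}\mu,(1+\varepsilon)\mu]$, since $(1-\beta)=(1+\varepsilon)^{-1}$ and $(1+\beta)\le(1+\varepsilon)$; so it suffices to bound the deviation probability $\prob[|\Gamma-\mu|\ge\beta\mu]$.

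Next I would exhibit $\Gamma$ as a sum of $3$-wise independent $[0,1]$-valued variables. For each $y \in R_F$ let $\gamma_y$ be the indicator of $h(y)=\alpha$; then $\Gamma = \sum_{y\in R_F}\gamma_y$, and by the expected-cell-size fact recalled in Section~\ref{sec:prelims} we have $\expect[\Gamma] = |R_F|/2^m = \mu$. Because $h$ is drawn from the $3$-independent family $H_{xor}(n,m,3)$, the variables $\{\gamma_y\}_{y\in R_F}$ are $3$-wise independent, each confined to $[0,1]$, so $\Gamma$ is of exactly the form required by Theorem~\ref{theorem:chernoff-hoeffding}.

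The crux is verifying the hypothesis of Theorem~\ref{theorem:chernoff-hoeffding} for $r=3$, namely $3 \le \lfloor \beta^2 \mu\, e^{-1/2}\rfloor$. This is where the conditioning $i \le \log_2 |R_F|$ enters: from $m = i-l$ it gives $2^m \le |R_F|/2^l$, hence $\mu = |R_F|/2^m \ge 2^l$. I would then combine this with $l = \lfloor\log_2 \mathit{pivot}\rfloor - 1$ and $\mathit{pivot} = 2\lceil 3e^{1/2}(1+1/\varepsilon)^2\rceil = 2\lceil 3e^{1/2}/\beta^2\rceil$ to push through $\lfloor\beta^2\mu\, e^{-1/2}\rfloor \ge 3$; this is precisely what the constant $3e^{1/2}$ and the leading factor $2$ in $\mathit{pivot}$ are engineered to guarantee. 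I expect this step — carefully propagating the floor in $l$ and the ceiling in $\mathit{pivot}$ through to the inequality on $\lfloor\beta^2\mu\, e^{-1/2}\rfloor$ — to be the main technical obstacle, since it is the single place where the otherwise opaque choice of threshold is used, and where an off-by-a-constant slip would degrade the bound from $e^{-3/2}$ to $e^{-1}$ or worse.

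With the hypothesis verified, Theorem~\ref{theorem:chernoff-hoeffding} gives $\prob[|\Gamma-\mu|\ge\beta\mu]\le e^{-3/2}$, and by the implication from the first paragraph the complementary (good) event has probability at least $1 - e^{-3/2}$, as claimed. The one point I would treat with care is the conditioning itself: the terminal counter $i$ is a random variable determined by the hash choices at \emph{all} iterations, and conditioning on termination at a given level couples the terminal hash with the requirement that its cell be small and non-empty. I would handle this by fixing each admissible value of $i$, applying the per-level concentration bound to the fresh, uniformly chosen hash used at that level, and then arguing that intersecting with the (measurable) events defining termination does not inflate the conditional deviation probability past $e^{-3/2}$; making this passage from the per-level bound to the stated conditional bound rigorous is the delicate measure-theoretic detail I would want to get exactly right.
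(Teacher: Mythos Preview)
Your proposal is correct and follows essentially the same route as the paper: both reduce to the per-level concentration bound on $|R_{F,h,\alpha}|$ via $3$-wise independent indicators, set $\beta = \varepsilon/(1+\varepsilon)$, and verify the hypothesis $3 \le \lfloor \beta^2\mu\,e^{-1/2}\rfloor$ of Theorem~\ref{theorem:chernoff-hoeffding} using $i \le \log_2|R_F|$ together with the definition of $\mathit{pivot}$ (the paper makes the simplifying assumption that $\log_2\mathit{pivot}$ is an integer to push the floor/ceiling arithmetic through). The conditioning subtlety you flag is real, and the paper handles it rather loosely---it simply establishes the bound for every admissible level $i \in \{l,\ldots,\lfloor\log_2|R_F|\rfloor\}$ over a fresh uniform $(h,\alpha)$ and asserts this suffices---so your instinct to treat that passage carefully is, if anything, more scrupulous than the original.
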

\begin{proof}
Referring to the pseudocode of {\ApproxMCCore}, the lemma is trivially
satisfied if $|R_F| \le \mathit{pivot}$.  Therefore, the only
non-trivial case to consider is when $|R_F| > \mathit{pivot}$ and
{\ApproxMCCore} returns from line $13$ of the pseudocode.  In this
case, the count returned is $2^{i-l}.|R_{F,h,\alpha}|$, where $l =
\lfloor \log_2 (\mathit{pivot}) \rfloor - 1$ and $\alpha, i$ and $h$
denote (with abuse of notation) the values of the corresponding
variables and hash functions in the final iteration of the
repeat-until loop in lines $6$--$11$ of the pseudocode.  

For simplicity of exposition, we assume henceforth that $\log_2
(\mathit{pivot})$ is an integer.  A more careful analysis removes this
restriction with only a constant factor scaling of the probabilities.
From the pseudocode of {\ApproxMCCore}, we know that
$\mathit{pivot} =2 \left\lceil 3e^{1/2}\left(1 +
\frac{1}{\varepsilon}\right)^2 \right\rceil$.  

Furthermore, the value of $i$ is always in $\{l, \ldots n\}$.  Since
$\mathit{pivot} < |R_F| \le 2^n$ and $l = \lfloor \log_2
\mathit{pivot} \rfloor - 1$, we have $l < \log_2 |R_F| \le n$.  The
lemma is now proved by showing that for every $i$ in $\{l, \ldots
\lfloor \log_2 |R_F| \rfloor\}$, $h \in H(n, i-l, 3)$ and $\alpha \in
\{0,1\}^{i-l}$, we have $\prob\left[(1 + \varepsilon)^{-1}\cdot |R_F|
  \le 2^{i-l}|R_{F,h,\alpha}|\right.$ $\left.\le (1 +
  \varepsilon)\cdot |R_F|\right]$ $\ge (1 - e^{-3/2})$.

For every $y \in \{0, 1\}^n$ and for every $\alpha \in \{0,
1\}^{i-l}$, define an indicator variable $\gamma_{y, \alpha}$ as
follows: $\gamma_{y, \alpha} = 1$ if $h(y) = \alpha$, and
$\gamma_{y,\alpha} = 0$ otherwise.  Let us fix $\alpha$ and $y$ and
choose $h$ uniformly at random from $H(n, i-l, 3)$.  The random choice
of $h$ induces a probability distribution on $\gamma_{y, \alpha}$,
such that $\prob\left[\gamma_{y, \alpha} = 1\right] = \prob\left[h(y)
  = \alpha\right] = 2^{-(i-l)}$, and
$\expect\left[\gamma_{y,\alpha}\right] = \prob\left[\gamma_{y, \alpha}
  = 1\right] = 2^{-(i-l)}$.  In addition, the $3$-wise independence of
hash functions chosen from $H(n, i-l, 3)$ implies that for every
distinct $y_a, y_b, y_c \in R_F$, the random variables $\gamma_{y_a,
  \alpha}$, $\gamma_{y_b, \alpha}$ and $\gamma_{y_c, \alpha}$ are
$3$-wise independent.

Let $\Gamma_\alpha = \sum_{y \in R_F} \gamma_{y, \alpha}$ and
$\mu_\alpha = \expect\left[\Gamma_\alpha\right]$.  Clearly,
$\Gamma_\alpha = |R_{F, h, \alpha}|$ and $\mu_\alpha = \sum_{y \in
  R_F} \expect\left[\gamma_{y, \alpha}\right] = 2^{-(i-l)}|R_F|$.
Since $|R_F| > \mathit{pivot}$ and $i \leq \log_2 |R_F|$, using the expression for
$\mathit{pivot}$, we get $3 \le \left\lfloor e^{-1/2}(1 +
\frac{1}{\varepsilon})^{-2}\cdot\frac{|R_F|}{2^{i-l}}
\right\rfloor$. Therefore, using Theorem
\ref{theorem:chernoff-hoeffding},
$\prob\left[|R_F|.\left(1-\frac{\varepsilon}{1+\varepsilon}\right) \leq
  2^{i-l}|R_{F,h,\alpha}|\right.$ $\left.\leq
  (1+\frac{\varepsilon}{1+\varepsilon})|R_F|\right] \ge 1- e^{-3/2}$.
 Simplifying and noting that $\frac{\varepsilon}{1+\varepsilon} <
\varepsilon$ for all $\varepsilon > 0$, we obtain
$\prob\left[(1+\varepsilon)^{-1}\cdot |R_F| \leq \right.$ $\left. 
  2^{i-l}|R_{F,h,\alpha}|\leq (1+ \varepsilon)\cdot |R_F| \right] \ge 1- e^{-3/2}$.
\end{proof}

\begin{lemma}\label{lm:nonbotProb}
Given $|R_F| > \mathit{pivot}$, the probability that an invocation of
{\ApproxMCCore} from {\ApproxMC} returns non-$\bot$ with $i \le \log_2
|R_F|$, is at least $1-e^{-3/2}$.
\end{lemma}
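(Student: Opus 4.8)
The plan is to reduce the event in question to a single, well-chosen iteration of the repeat-until loop and then control it with the Chernoff-Hoeffding bound of Theorem~\ref{theorem:chernoff-hoeffding}. Since $|R_F| > \mathit{pivot}$, the algorithm does not return in line~3 and enters the loop. Write $l = \lfloor \log_2 \mathit{pivot}\rfloor - 1$ and, as in the proof of Lemma~\ref{lm:probProof}, assume for exposition that $\log_2 \mathit{pivot}$ is an integer, so that $2^{l} = \mathit{pivot}/2$; the rounding is absorbed, as before, into a constant-factor adjustment. Let $i^{*} = \lfloor \log_2 |R_F|\rfloor$, the largest loop-counter value still satisfying $i^{*} \le \log_2 |R_F|$, and let $T_{i^{*}} = |R_{F,h,\alpha}|$ denote the size of the randomly chosen cell at that iteration, with $\mu = \expect[T_{i^{*}}] = 2^{-(i^{*}-l)}|R_F|$.

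The first step is the reduction. \ApproxMCCore{} returns a non-$\bot$ value with final counter $i \le \log_2 |R_F|$ exactly when some iteration $i \in \{l, \dots, i^{*}\}$ produces a cell with $1 \le T_i \le \mathit{pivot}$, because the loop halts at the first such iteration. Hence the complementary (failure) event is contained in the single-iteration event that the cell at level $i^{*}$ is empty or oversized, i.e.\ $\{T_{i^{*}} = 0\} \cup \{T_{i^{*}} > \mathit{pivot}\}$. I would therefore bound $\prob[\,T_{i^{*}} = 0 \text{ or } T_{i^{*}} > \mathit{pivot}\,]$ and conclude. Note that this reduction needs no independence between iterations: it uses only that the failure event implies failure at the single level $i^{*}$.

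The key step is to observe that the choice of $\mathit{pivot}$ makes both bad outcomes large deviations of $T_{i^{*}}$ about its mean. Since $i^{*} \le \log_2 |R_F|$ we have $\mu = 2^{l}\,|R_F|/2^{i^{*}} \ge 2^{l} = \mathit{pivot}/2$, so in the idealized case $\mu = \mathit{pivot}/2$ the threshold satisfies $\mathit{pivot} = 2\mu$. Consequently $T_{i^{*}} = 0$ forces $|T_{i^{*}} - \mu| = \mu$, and $T_{i^{*}} > \mathit{pivot}$ forces $T_{i^{*}} - \mu > \mu$; in both cases $|T_{i^{*}} - \mu| \ge \mu$. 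Choosing $\beta = \sqrt{3 e^{1/2}/\mu} \le 1$, so that $\{|T_{i^{*}}-\mu|\ge\mu\} \subseteq \{|T_{i^{*}}-\mu|\ge \beta\mu\}$ and $\lfloor \beta^{2}\mu e^{-1/2}\rfloor = 3$ (this floor is well-defined since $\mu \ge \mathit{pivot}/2 \ge 3e^{1/2}(1+1/\varepsilon)^2$), and using that the indicators $\gamma_{y,\alpha}$ underlying $T_{i^{*}}$ are $3$-wise independent, Theorem~\ref{theorem:chernoff-hoeffding} with $r = 3$ yields $\prob[\,|T_{i^{*}} - \mu| \ge \beta\mu\,] \le e^{-3/2}$. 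Combining this tail bound with the reduction gives $\prob[\text{non-}\bot,\ i \le \log_2 |R_F|] \ge 1 - e^{-3/2}$.

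The step I expect to be the main obstacle is handling the case $\mu > \mathit{pivot}/2$, which occurs whenever $|R_F|$ (or $\mathit{pivot}$) is not a power of two. Then $\mathit{pivot} < 2\mu$, the containment $\{T_{i^{*}} > \mathit{pivot}\} \subseteq \{|T_{i^{*}}-\mu|\ge \mu\}$ fails, and one cannot simultaneously keep the deviation threshold at $\mathit{pivot}-\mu$ and meet the hypothesis $\lfloor \beta^{2}\mu e^{-1/2}\rfloor \ge 3$ of the theorem. Controlling the over-size tail in this regime is the delicate point, and I would treat it exactly as the authors handle the analogous difficulty in Lemma~\ref{lm:probProof}: by the more careful accounting that removes the idealizing assumption on $\log_2 \mathit{pivot}$ at the cost of a constant-factor change in the probabilities.
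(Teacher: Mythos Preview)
Your proof is correct and reaches the same endpoint as the paper: the success probability is at least $p_{i^*}$ for $i^* = \lfloor\log_2|R_F|\rfloor$, and $p_{i^*}\ge 1-e^{-3/2}$ via Theorem~\ref{theorem:chernoff-hoeffding}. The route differs slightly in the reduction step. The paper invokes independence of the per-iteration random choices to expand the success probability as $P = p_l + (1-p_l)p_{l+1} + \cdots + \prod_{k<i^*}(1-p_k)\,p_{i^*}$ and then telescopes to $P\ge p_{i^*}$; you get the same inequality in one line from the containment $\{\text{fail at every }i\le i^*\}\subseteq\{\text{fail at }i^*\}$, which, as you observe, does not need independence across iterations. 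Your reduction is thus more elementary, and the paper's longer derivation buys nothing additional here. For the tail bound at level $i^*$, the paper simply reuses the conclusion of Lemma~\ref{lm:probProof} with $\beta=\varepsilon/(1+\varepsilon)$ and then notes $[(1+\varepsilon)^{-1}\tfrac{\mathit{pivot}}{2},\,(1+\varepsilon)\tfrac{\mathit{pivot}}{2}]\subseteq[1,\mathit{pivot}]$; this is a somewhat cleaner finish than introducing a separate $\beta$. Both arguments rest on the same idealization $\mu=\mathit{pivot}/2$ and defer the non-integral case to a constant-factor adjustment, so your caveat about $\mu>\mathit{pivot}/2$ applies equally to the paper's own proof.
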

\begin{proof}
Let us denote $\log_2 |R_F| - l$ $=$ $\log_2 |R_F| -
(\left\lfloor\log_2 (\mathit{pivot}) \right\rfloor - 1)$ by $m$.  
Since $|R_F| > \mathit{pivot}$ and $|R_F| \le 2^n$, we have $l < m+l \le n$.
Let $p_i~(l \le i \le n)$ denote the conditional probability that
{\ApproxMCCore}$(F, \mathit{pivot})$ terminates in iteration $i$ of
the repeat-until loop (lines $6$--$11$ of the pseudocode) with $1 \le
|R_{F,h,\alpha}| \le \mathit{pivot}$, given $|R_F| > \mathit{pivot}$.
Since the choice of $h$ and $\alpha$ in each iteration of the loop are
independent of those in previous iterations, the conditional
probability that {\ApproxMCCore}$(F, \mathit{pivot})$ returns
non-$\bot$ with $i \le \log_2 |R_F| = m+l$, given $|R_F| >
\mathit{pivot}$, is $p_l + (1-p_l)p_{l+1}$ $+ \cdots +
(1-p_l)(1-p_{l+1})\cdots(1-p_{m+l-1})p_{m+l}$. Let us denote this sum
by $P$.  Thus, $P = p_l + \sum_{i=l+1}^{m+l} \prod_{k=l}^{i-1}
(1-p_k)p_i$ $\,\ge\, \left(p_l + \sum_{i=l+1}^{m+l-1}
\prod_{k=l}^{i-1} (1-p_k)p_i\right)p_{m+l}$ $+$ $\prod_{s=l}^{m+l-1}
(1-p_s)p_{m+l}$ $= p_{m+l}$.  The lemma is now proved by using
Theorem~\ref{theorem:chernoff-hoeffding} to show that $p_{m+l} \ge
1-e^{-3/2}$.

It was shown in Lemma~\ref{theorem:chernoff-hoeffding} that
$\prob\left[(1+\varepsilon)^{-1}\cdot |R_F| \leq
  2^{i-l}|R_{F,h,\alpha}|\right.$ $\left.\leq (1+ \varepsilon)\cdot
  |R_F| \right] \ge 1- e^{-3/2}$ for every $i \in \{l, \ldots \lfloor \log_2
|R_F| \rfloor\}$, $h \in H(n, i-l, 3)$ and $\alpha \in \{0,1\}^{i-l}$.
Substituting $\log_2 |R_F| = m+l$ for $i$, re-arranging terms and
noting that the definition of $m$ implies $2^{-m}|R_F| =
\mathit{pivot}/2$, we get
$\prob\left[(1+\varepsilon)^{-1}(\mathit{pivot}/2)\right.$
  $\left. \leq |R_{F,h,\alpha}|\right.$ $\left.\leq (1+
  \varepsilon)(\mathit{pivot}/2) \right] \ge 1- e^{-3/2}$.  Since $0 <
\varepsilon \le 1$ and $\mathit{pivot} > 4$, it follows that
$\prob\left[1 \le |R_{F,h,\alpha}| \le \mathit{pivot}\right]$ $\ge$
$1-e^{-3/2}$.  Hence, $p_{m+l} \ge 1-e^{-3/2}$.
\end{proof}

\begin{theorem}\label{thm:almost-approx}
Let an invocation of {\ApproxMCCore} from {\ApproxMC} return $c$. Then
$\prob\left[c \neq \bot \mbox{ and } (1 + \varepsilon)^{-1}\cdot |R_F| \le c
  \le (1 + \varepsilon)\cdot |R_F|\right]$ $\ge (1 - e^{-3/2})^2 > 0.6$.
\end{theorem}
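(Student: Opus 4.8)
The plan is to reduce the statement to the two preceding lemmas by a short conditioning argument, after peeling off a trivial case. First I would note that if $|R_F| \le \mathit{pivot}$, then {\ApproxMCCore} exits at line~3 with $S$ equal to the full model set, so $c = |R_F| \neq \bot$ and the returned count is exact; the event in the theorem then holds with probability $1$, comfortably above the claimed bound. Thus the entire content lies in the case $|R_F| > \mathit{pivot}$, which is exactly the regime addressed by Lemmas~\ref{lm:probProof} and~\ref{lm:nonbotProb}.

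For that case I would fix three events over the internal randomness of {\ApproxMCCore} (the hash functions and offsets $\alpha$ chosen in lines 8--9): let $B$ be the event $\{c \neq \bot\}$, let $T$ be the event $\{i \le \log_2 |R_F|\}$ where $i$ is the final loop counter, and let $A$ be the accuracy event $\{(1+\varepsilon)^{-1}|R_F| \le c \le (1+\varepsilon)|R_F|\}$. In this notation Lemma~\ref{lm:probProof} reads $\prob[A \mid B \cap T] \ge 1 - e^{-3/2}$, and Lemma~\ref{lm:nonbotProb} reads $\prob[B \cap T] \ge 1 - e^{-3/2}$, while the theorem asks for a lower bound on $\prob[B \cap A]$. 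The key step is then to chain the two bounds: since $B \cap T \subseteq B$, the target event contains $A \cap B \cap T$, so $\prob[B \cap A] \ge \prob[A \cap B \cap T] = \prob[A \mid B \cap T]\cdot\prob[B \cap T] \ge (1-e^{-3/2})^2$. I would close with the numerical check $(1-e^{-3/2})^2 > 0.6$, using $e^{-3/2} \approx 0.223$.

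I expect no genuine obstacle to remain at this stage, since the probabilistic heavy lifting (the Chernoff--Hoeffding estimate for $|R_{F,h,\alpha}|$) is already discharged inside the two lemmas. The only point demanding care is the bookkeeping of the conditioning: one must confirm that the conditioning event of Lemma~\ref{lm:probProof} is \emph{precisely} the event $B \cap T$ whose probability Lemma~\ref{lm:nonbotProb} controls, and that dropping the qualifier $i \le \log_2 |R_F|$ when passing from $A \cap B \cap T$ to $B \cap A$ only enlarges the event and hence preserves the inequality. Getting the direction of these inclusions right is what makes the product $(1-e^{-3/2})^2$ a valid lower bound rather than a mere approximation.
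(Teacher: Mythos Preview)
Your proposal is correct and mirrors the paper's own argument almost exactly: the paper's proof sketch simply observes that the target probability dominates $\prob[A \cap B \cap T]$ and then multiplies the bounds from Lemmas~\ref{lm:probProof} and~\ref{lm:nonbotProb}. Your write-up is in fact more careful than the paper's sketch, since you explicitly separate out the trivial case $|R_F| \le \mathit{pivot}$ and spell out the containment $A \cap B \cap T \subseteq A \cap B$ that justifies the final inequality.
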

\noindent \emph{Proof sketch:} It is easy to see that the required
probability is at least as large as $\prob\left[c \neq \bot \mbox{ and
  } i \leq \log_2|R_F| \mbox{ and } (1 + \varepsilon)^{-1}\cdot |R_F|
  \le c \le (1 + \varepsilon)\cdot |R_F|\right]$. From
Lemmas~\ref{lm:probProof} and \ref{lm:nonbotProb}, the latter
probability is $\ge (1 - e^{-3/2})^2$.

We now turn to proving that the confidence can be raised to at least
$1-\delta$ for $\delta \in (0, 1]$ by invoking {\ApproxMCCore}
$\mathcal{O}(\log_2(1/\delta))$ times, and by using the median of the
non-$\bot$ counts thus returned.  For convenience of exposition, we
use $\eta(t, m, p)$ in the following discussion to denote the
probability of at least $m$ heads in $t$ independent tosses of a
biased coin with $\prob\left[\mathit{heads}\right] = p$.  Clearly, 
$\eta(t, m, p) = \sum_{k=m}^{t} \binom{t}{k} p^{k} (1-p)^{t-k}$.

\begin{theorem} \label{theorem:approx}
Given a propositional formula $F$ and parameters $\varepsilon ~(0 <
\varepsilon \le 1)$ and $\delta ~(0 < \delta \le
1)$, suppose {\ApproxMC}$(F, \varepsilon, \delta)$ returns $c$.  Then
$\prob\left[{\left(1+\varepsilon\right)}^{-1}\cdot |R_F| \le c \right.$ $\left.\le
  (1+\varepsilon)\cdot |R_F|\right]$ $\ge 1-\delta$.
\end{theorem}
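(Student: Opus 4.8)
The plan is to combine the single-invocation guarantee of Theorem~\ref{thm:almost-approx} with a standard median-amplification argument, reducing the whole statement to a tail bound on a binomial random variable. Call one invocation of {\ApproxMCCore}$(F, \mathit{pivot})$ \emph{good} if it returns a value $c_j \neq \bot$ satisfying $(1+\varepsilon)^{-1}\cdot|R_F| \le c_j \le (1+\varepsilon)\cdot|R_F|$. By Theorem~\ref{thm:almost-approx} each invocation is good with probability $p \ge (1-e^{-3/2})^2 > 0.6$, and since {\ApproxMC} calls {\ApproxMCCore} exactly $t = \lceil 35\log_2(3/\delta)\rceil$ times with fresh, independent random choices (lines 8--9), the number $Y$ of good invocations is distributed as $\mathrm{Binomial}(t,p)$ for a fixed $p \ge (1-e^{-3/2})^2$.

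The key step is to connect $Y$ to the value actually returned, namely $\mathsf{FindMedian}(C)$. Every good invocation is non-$\bot$, so all $Y$ good values land in the list $C$, and $|C| \le t$. When the entries of $C$ are sorted in ascending order, those lying in the target interval form a contiguous block (they are exactly the entries between the two thresholds), flanked below by under-estimates and above by over-estimates. Consequently, if strictly more than half of the entries of $C$ are good, then fewer than half lie below the interval and fewer than half lie above it, so the median falls inside the good block regardless of the parity convention. Since the number of good entries of $C$ equals $Y$ and $|C| \le t$, the event $Y > t/2$ forces more than half of $C$ to be good (and in particular $C \neq \emptyset$, so the median is not $\bot$). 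Hence $\prob[{\ApproxMC}\text{ returns a good }c] \ge \prob[Y > t/2]$, and it is convenient to phrase the requirement as $Y > t/2$ over all $t$ trials rather than $Y > |C|/2$, precisely because the median is taken only over the non-$\bot$ entries.

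It then remains to show $\prob[Y > t/2] \ge 1-\delta$. First I would invoke the fact that the binomial tail $\eta(t,\lceil t/2\rceil,p)$ is monotonically non-decreasing in $p$ (stochastic dominance), so it suffices to bound it at the worst case $p = (1-e^{-3/2})^2 \approx 0.604$. The complementary event is ``at least $t/2$ of the $t$ invocations are bad'', where the per-trial bad probability is $1-p < 0.4$; this is a lower-tail large-deviation event whose mean $(1-p)t$ lies strictly below $t/2$, so a Chernoff--Hoeffding estimate yields an exponentially small bound of the form $e^{-c\,t}$ for an absolute constant $c > 0$. The constants $35$ and $\log_2(3/\delta)$ in {\ComputeIterCount} are calibrated so that this bound is at most $\delta$ for the chosen $t$, which completes the proof.

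I expect the main obstacle to be the quantitative calibration in this last step: a crude Chernoff estimate produces a constant larger than $35$, so the argument must use a sufficiently tight tail bound (for instance a relative-entropy / exponential-moment estimate) evaluated at $p = (1-e^{-3/2})^2$ to certify that $t = \lceil 35\log_2(3/\delta)\rceil$ invocations genuinely suffice. The median-to-binomial reduction itself is conceptually routine, the only point demanding care being the restriction of the median to the non-$\bot$ list $C$, which is handled cleanly by counting good invocations over all $t$ trials.
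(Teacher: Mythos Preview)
Your proposal is correct and follows essentially the same strategy as the paper: both reduce to bounding $\prob[Z \ge t/2]$ where $Z$ counts ``bad'' invocations (either $\bot$ or outside the interval) with per-trial probability $< 0.4$, and both handle the fact that the median is taken over the non-$\bot$ sublist $C$ by the observation that $Y > t/2 \ge |C|/2$.

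Two small differences are worth flagging. First, your reduction is slightly more streamlined: you argue directly that $Y > t/2$ implies the median of $C$ is good, whereas the paper conditions on $|C| = q$, shows the error event requires $Z \ge t - q + \lceil q/2\rceil$, and then uses $t - q/2 \ge t/2$; the two routes are equivalent. Second---and this addresses the obstacle you anticipated---the paper does \emph{not} invoke a named Chernoff/Hoeffding inequality for the final tail bound. Instead it bounds $\eta(t,t/2,0.4)$ by the elementary estimate
\[
\sum_{k=t/2}^{t}\binom{t}{k}(0.4)^k(0.6)^{t-k}
\;\le\; \binom{t}{t/2}\sum_{k=t/2}^{t}(0.4)^k(0.6)^{t-k}
\;\le\; 2^t\cdot 3\cdot(0.24)^{t/2}
\;=\; 3\cdot(0.98)^t,
\]
from which $t = \lceil 35\log_2(3/\delta)\rceil$ gives $3\cdot(0.98)^t \le \delta$ directly. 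This explicit computation delivers the constant $35$ without the calibration worries you raised; your relative-entropy route would also work, but the paper's argument is more self-contained.
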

\begin{proof}
Throughout this proof, we assume that {\ApproxMCCore} is invoked $t$
times from {\ApproxMC}, where $t = \left\lceil 35\log_2 (3/\delta)
\right\rceil$ (see pseudocode for {\ComputeIterCount} in
Section~\ref{sec:algo}).  Referring to the pseudocode of {\ApproxMC},
the final count returned by {\ApproxMC} is the median of non-$\bot$
counts obtained from the $t$ invocations of {\ApproxMCCore}.  Let
$Err$ denote the event that the median is not in
$\left[(1+\varepsilon)^{-1}\cdot |R_F|, (1+\varepsilon)\cdot |R_F|\right]$.  Let
``$\#\mathit{non }\bot = q$'' denote the event that $q$ (out of $t$)
values returned by {\ApproxMCCore} are non-$\bot$.  Then,
$\prob\left[Err\right]$ $=$ $\sum_{q=0}^t \prob\left[Err \mid
  \#\mathit{non }\bot = q\right]$ $\cdot$
$\prob\left[\#\mathit{non }\bot = q\right]$.

In order to obtain $\prob\left[Err \mid \#\mathit{non }\bot =
  q\right]$, we define a $0$-$1$ random variable $Z_i$, for $1 \le i
\le t$, as follows.  If the $i^{th}$ invocation of {\ApproxMCCore}
returns $c$, and if $c$ is either $\bot$ or a non-$\bot$ value that
does not lie in the interval $[(1+\varepsilon)^{-1}\cdot |R_F|,
  (1+\varepsilon)\cdot |R_F|]$, we set $Z_i$ to 1; otherwise, we set
it to $0$.  From Theorem~\ref{thm:almost-approx}, $\prob\left[Z_i =
  1\right] = p < 0.4$.  If $Z$ denotes $\sum_{i=1}^t Z_i$, a necessary
(but not sufficient) condition for event $Err$ to occur, given that
$q$ non-$\bot$s were returned by {\ApproxMCCore}, is $Z \ge
(t-q+\lceil q/2\rceil)$.  To see why this is so, note that $t-q$
invocations of {\ApproxMCCore} must return $\bot$.  In addition, at
least $\lceil q/2 \rceil$ of the remaining $q$ invocations must return
values outside the desired interval. To simplify the exposition, let
$q$ be an even integer.  A more careful analysis removes this
restriction and results in an additional constant scaling factor for
$\prob\left[Err\right]$.  With our simplifying assumption,
$\prob\left[Err \mid \#\mathit{non }\bot = q\right] \le \prob[Z \ge (t
  - q + q/2)]$ $=\eta(t, t-q/2, p)$.  Since $\eta(t, m, p)$ is a
decreasing function of $m$ and since $q/2 \le t-q/2 \le t$, we have
$\prob\left[Err \mid \#\mathit{non }\bot = q\right] \le \eta(t, t/2,
p)$.  If $p < 1/2$, it is easy to verify that $\eta(t, t/2, p)$ is an
increasing function of $p$.  In our case, $p < 0.4$; hence,
$\prob\left[Err \mid \#\mathit{non }\bot = q\right] \le \eta(t, t/2,
0.4)$.

It follows from above that $\prob\left[ Err \right]$ $=$
$\sum_{q=0}^t$ $\prob\left[Err \mid \#\mathit{non }\bot = q\right]$
$\cdot\prob\left[\#\mathit{non }\bot = q\right]$ $\le$ $\eta(t, t/2,
0.4)\cdot$ $\sum_{q=0}^t \prob\left[\#\mathit{non }\bot = q\right]$
$=$ $\eta(t, t/2, 0.4)$.  Since $\binom{t}{t/2} \ge \binom{t}{k}$ for
all $t/2 \le k \le t$, and since $\binom{t}{t/2} \le 2^t$, we have
$\eta(t, t/2, 0.4)$ $=$ $\sum_{k=t/2}^{t} \binom{t}{k} (0.4)^{k}
(0.6)^{t-k}$ $\le$ $\binom{t}{t/2} \sum_{k=t/2}^t (0.4)^k (0.6)^{t-k}$
\added{$\le 2^t \sum_{k=t/2}^t (0.6)^t (0.4/0.6)^{k}$}$\le 2^t \cdot 3 \cdot (0.6 \times 0.4)^{t/2}$ $\le 3\cdot(0.98)^t$.
Since $t = \left\lceil 35\log_2 (3/\delta) \right\rceil$, it follows
that $\prob\left[Err\right] \le \delta$.
\end{proof}

\begin{theorem} \label{th:complexity}
Given an oracle for {\SAT}, {\ApproxMC}$(F, \varepsilon, \delta)$
runs in time polynomial in $\log_2(1/\delta), |F|$ and
$1/\varepsilon$ relative to the oracle.
\end{theorem}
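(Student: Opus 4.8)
The plan is to bound the total running time of \ApproxMC{} relative to the \SAT{} oracle by (i) counting how many times the core subroutine \ApproxMCCore{} is invoked, (ii) bounding the number of oracle queries made in a single invocation of \ApproxMCCore{}, and (iii) verifying that every query formula has size polynomial in $|F|$, so that each query and all surrounding bookkeeping cost only polynomial time. First I would record, directly from the pseudocode of \ComputeThreshold{} and \ComputeIterCount{}, that $\mathit{pivot}$ is $\mathcal{O}(1/\varepsilon^2)$ and that the iteration count $t$ is $\mathcal{O}(\log_2(1/\delta))$; since \ApproxMC{} does nothing beyond computing these two parameters, calling \ApproxMCCore{} exactly $t$ times, and taking a median, its cost is at most $t$ times the cost of one \ApproxMCCore{} call plus lower-order bookkeeping.

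Next I would analyze one invocation of \ApproxMCCore{}. The repeat-until loop of lines $6$--$11$ increments $i$ on every pass and is guaranteed to stop by $i = n$, so it executes at most $n - l + 1 = \mathcal{O}(n)$ times, where $n \le |F|$ is the number of variables. Each pass does three things: it draws a hash function from $H_{xor}(n, i-l, 3)$ and an $\alpha \in \{0,1\}^{i-l}$, which requires $\mathcal{O}(n(i-l)) = \mathcal{O}(n^2)$ random bits; it forms the formula $F \wedge (h(z_1,\ldots,z_n)=\alpha)$, whose xor part adds only $\mathcal{O}(n^2)$ symbols, so the query formula still has size polynomial in $|F|$; and it calls \BoundedSAT{} with threshold $\mathit{pivot}+1$.

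The remaining step is to bound the cost of \BoundedSAT$(F', \mathit{pivot}+1)$ relative to the oracle. Using the \SAT{} oracle together with the self-reducibility of \SAT{}, a single model of $F'$ can be extracted with $\mathcal{O}(n)$ oracle queries by fixing variables one at a time; enumerating up to $\mathit{pivot}+1$ distinct models is then done by repeatedly extracting a model and conjoining a blocking clause that rules it out. Hence \BoundedSAT{} issues $\mathcal{O}(n\cdot\mathit{pivot}) = \mathcal{O}(n/\varepsilon^2)$ oracle queries, and the accumulated blocking clauses keep every query formula polynomial in $|F|$. Combining the three steps, one \ApproxMCCore{} call makes $\mathcal{O}(n^2/\varepsilon^2)$ oracle queries, so \ApproxMC{} makes $\mathcal{O}(n^2 \log_2(1/\delta)/\varepsilon^2)$ queries in all; since each query formula and all intervening work (random-bit selection, the final median computation) are polynomial in $|F|$, the overall running time relative to the oracle is polynomial in $|F|$, $1/\varepsilon$ and $\log_2(1/\delta)$, as claimed.

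I expect the main obstacle to be the careful justification of the \BoundedSAT{} bound: with only a decision oracle for \SAT{} one must invoke self-reducibility to turn satisfiability tests into model extraction, and must argue that repeatedly adding blocking clauses both keeps the number of oracle calls polynomial and does not blow up the size of the query formulas beyond a polynomial in $|F|$. Everything else is routine counting of loop iterations and of random bits.
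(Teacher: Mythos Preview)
Your proposal is correct and follows essentially the same decomposition as the paper: bound $t$ by $\mathcal{O}(\log_2(1/\delta))$, bound the number of \BoundedSAT{} calls per \ApproxMCCore{} invocation by $\mathcal{O}(n)$, and bound the cost of each \BoundedSAT{} call in terms of $\mathit{pivot}$ and $|F|$. The one place you go beyond the paper is in your treatment of \BoundedSAT{}: the paper simply asserts that each call ``can be implemented by at most $\mathit{pivot}+1$ calls to a \SAT{} oracle,'' which tacitly treats the oracle as returning a witness, whereas you correctly observe that a \emph{decision} oracle requires the self-reducibility argument (fixing variables one at a time and adding blocking clauses), yielding an extra factor of $n$ in the oracle-call count. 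This makes your bound $\mathcal{O}(n^2/\varepsilon^2)$ per \ApproxMCCore{} call rather than the paper's implicit $\mathcal{O}(n/\varepsilon^2)$, but both are polynomial and the overall argument is the same.
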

\begin{proof}
Referring to the pseudocode for {\ApproxMC}, lines $1$--$3$ take time
no more than a polynomial in $\log_2(1/\delta)$ and $1/\varepsilon$.
The repeat-until loop in lines $4$--$9$ is repeated $t = \left\lceil
35\log_2(3/\delta)\right\rceil$ times. The time taken for each
iteration is dominated by the time taken by {\ApproxMCCore}.  Finally,
computing the median in line $10$ takes time linear in $t$.  The proof
is therefore completed by showing that {\ApproxMCCore} takes time
polynomial in $|F|$ and $1/\varepsilon$ relative to the {\SAT} oracle.

Referring to the pseudocode for {\ApproxMCCore}, we find that
{\BoundedSAT} is called $\mathcal{O}(|F|)$ times.  Each such call can
be implemented by at most $\mathit{pivot}+1$ calls to a {\SAT} oracle,
and takes time polynomial in $|F|$ and $\mathit{pivot}+1$ relative to
the oracle.  Since $\mathit{pivot}+1$ is in
$\mathcal{O}(1/\varepsilon^2)$, the number of calls to the {\SAT}
oracle, and the total time taken by all calls to {\BoundedSAT} in each
invocation of {\ApproxMCCore} is a polynomial in $|F|$ and
$1/\varepsilon$ relative to the oracle.  The random choices in lines 8
and 9 of {\ApproxMCCore} can be implemented in time polynomial in $n$
(hence, in $|F|$) if we have access to a source of random bits.
Constructing $F \wedge h(z_1, \ldots z_n) = \alpha$ in line 10 can
also be done in time polynomial in $|F|$.
\end{proof}

\section{Experimental Methodology}
\label{sec:experiment}
To evaluate the performance and quality of results of {\ApproxMC}, we
built a prototype implementation and conducted an extensive set of
experiments. The suite of benchmarks represent problems from practical domains as well as problems of theoretical interest. In particular, we considered a wide range of model counting benchmarks
from different domains including grid networks, plan recognition, DQMR networks, Langford sequences, circuit
synthesis, random $k$-CNF and logistics problems ~\cite{SangBearKautz2005,KrocSabSel2008}. The suite consisted of benchmarks ranging from 32 variables to 229100 variables in CNF representation. The complete set of benchmarks (numbering
above $200$) is available at \url{http://www.cs.rice.edu/CS/Verification/Projects/ApproxMC/}.  

All our experiments were conducted on a high-performance computing
cluster.  Each individual experiment was run on a single node of the
cluster; the cluster allowed multiple experiments to run in parallel.
Every node in the cluster had two quad-core Intel Xeon processors with
$4$GB of main memory. We used $2500$ seconds as the timeout for each
invocation of {\BoundedSAT} in {\ApproxMCCore}, and $20$ hours as the
timeout for {\ApproxMC}. If an invocation of {\BoundedSAT} in line
$10$ of the pseudo-code of {\ApproxMCCore} timed out, we repeated the
iteration (lines 6-11 of the pseudocode of {\ApproxMCCore}) without
incrementing $i$. The parameters $\varepsilon$ (tolerance) and
$\delta$ (confidence being $1-\delta$) were set to $0.75$ and $0.1$
respectively.  With these parameters, {\ApproxMC} successfully
computed counts for benchmarks with upto $33,000$ variables.

We implemented leap-frogging, as described in~\cite{SKV13}, to
estimate initial values of $i$ from which to start iterating the
repeat-until loop of lines $6$--$11$ of the pseudocode of
{\ApproxMCCore}.  %
To further optimize the running time, we obtained tighter estimates of
the iteration count $t$ used in algorithm {\ApproxMC}, compared to
those given by algorithm {\ComputeIterCount}.  A closer examination of
the proof of Theorem~\ref{theorem:approx} shows that it suffices to
have $\eta(t, t/2, 0.4) \le \delta$.  We therefore pre-computed a
table that gave the smallest $t$ as a function of $\delta$ such that
$\eta(t, t/2, 0.4) \le \delta$.  This sufficed for all our experiments
and gave smaller values of $t$ (we used $t$=41 for $\delta$=0.1) compared to those given by
{\ComputeIterCount}.

For purposes of comparison, we also implemented and conducted
experiments with the exact counter
{\Cachet}~\cite{Sang04combiningcomponent} by setting a timeout of $20$
hours on the same computing platform.  We compared the running time of
{\ApproxMC} with that of {\Cachet} for several benchmarks, ranging
from benchmarks on which {\Cachet} ran very efficiently to those on
which {\Cachet} timed out.  We also measured the quality of
approximation produced by {\ApproxMC} as follows.  For each benchmark
on which {\Cachet} did not time out, we obtained the approximate count
from {\ApproxMC} with parameters $\varepsilon = 0.75$ and $\delta =
0.1$, and checked if the approximate count was indeed within a factor
of $1.75$ from the exact count.  Since the theoretical guarantees
provided by our analysis are conservative, we also measured the
relative error of the counts reported by {\ApproxCount} using the
$L_1$ norm, for all benchmarks on which {\Cachet} did not time out.
For an input formula $F_i$, let $A_{F_i}$ (resp., $C_{F_i}$) be the
count returned by {\ApproxCount} (resp., {\Cachet}).  We computed the
$L_1$ norm of the relative error as $\frac{\sum_i |A_{F_i} -
  C_{F_i}|}{\sum_i C_{F_{i}}}$.
     
Since {\Cachet} timed out on most large benchmarks, we compared
{\ApproxMC} with state-of-the-art bounding counters as well.  As
discussed in Section~\ref{sec:intro}, bounding counters do not provide
any tolerance guarantees.  Hence their guarantees are significantly
weaker than those provided by {\approxMC}, and a direct comparison of
performance is not meaningful.  Therefore, we compared the sizes of
the intervals (i.e., difference between upper and lower bounds)
obtained from existing state-of-the-art bounding counters with those
obtained from {\approxMC}.  To obtain intervals from {\ApproxMC}, note
that Theorem~\ref{theorem:approx} guarantees that if {\ApproxMC}$(F,
\varepsilon, \delta)$ returns $c$, then $\prob[\frac{c}{1+\varepsilon}
  \le |R_F| $ $\le (1+\varepsilon)\cdot c]$ $\ge 1-\delta$.
Therefore, {\ApproxMC} can be viewed as computing the interval
$[\frac{c}{1+\varepsilon}, (1+\varepsilon)\cdot c]$ for the model count,
with confidence $\delta$.  We considered state-of-the-art lower
bounding counters, viz.  {\MBound}~\cite{gomes2006model},
{\HybridMBound}~\cite{gomes2006model},
{\SampleCount}~\cite{gomes2007sampling} and
{\BPCount}~\cite{KrocSabSel2008}, to compute a lower bound of the
model count, and used {\MiniCount}~\cite{KrocSabSel2008} to obtain an
upper bound.  We observed that {\SampleCount} consistently produced
better (i.e. larger) lower bounds than {\BPCount} for our benchmarks.
Furthermore, the authors of~\cite{gomes2006model} advocate using
{\HybridMBound} instead of {\MBound}.  Therefore, the lower bound for
each benchmark was obtained by taking the maximum of the bounds
reported by {\HybridMBound} and {\SampleCount}.

We set the confidence value for {\MiniCount} to $0.99$ and {\SampleCount} and {\HybridMBound} to $0.91$. For a detailed 
justification of these choices, we refer the reader to the full version of our paper.
Our implementation of {\HybridMBound} used the ``conservative"
approach described in~\cite{gomes2006model}, since this provides the
best lower bounds with the required confidence among all the
approaches discussed in~\cite{gomes2006model}.  Finally, to ensure
fair comparison, we allowed all bounding counters to run for $20$
hours on the same computing platform on which {\ApproxMC} was run.
\vspace*{-.15in}

\section{Results}
\label{sec:results}
The results on only a subset of our benchmarks are presented here for
lack of space.  
\begin{figure}[h!b]
\centering
\includegraphics[scale=0.7]{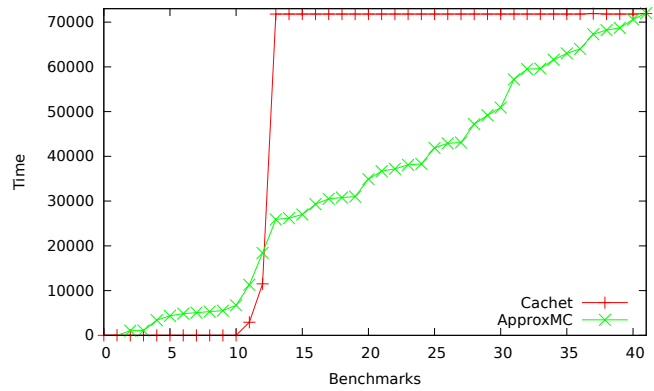}
\caption{Performance comparison between {\ApproxMC} and {\Cachet}. The
  benchmarks are arranged in increasing order of running time of
  {\ApproxMC}.}
\label{fig:perfomance_comparisons}
\end{figure}
Figure~\ref{fig:perfomance_comparisons} shows how the running times of
{\ApproxMC} and {\Cachet} compared on this subset of our benchmarks.  The
y-axis in the figure represents time in seconds, while the x-axis
represents benchmarks arranged in ascending order of running time of
{\ApproxMC}.  The comparison shows that although {\Cachet} performed
better than {\ApproxMC} initially, it timed out as the ``difficulty"
of problems increased.  {\ApproxMC}, however, continued to return
bounds with the specified tolerance and confidence, for many more
difficult and larger problems.  Eventually, however, even {\ApproxMC}
timed out for very large problem instances.  Our experiments clearly
demonstrate that there is a large class of practical problems that lie
beyond the reach of exact counters, but for which we can still obtain
counts with $(\varepsilon, \delta)$-style guarantees in reasonable
time.  This suggests that given a model counting problem, it is
advisable to run {\Cachet} initially with a small timeout.  If
{\Cachet} times out, {\ApproxMC} should be run with a larger timeout.
Finally, if {\ApproxMC} also times out, counters with much weaker
guarantees but shorter running times, such as bounding counters,
should be used.

\begin{figure}[h!b]\centering
\includegraphics[scale=0.7]{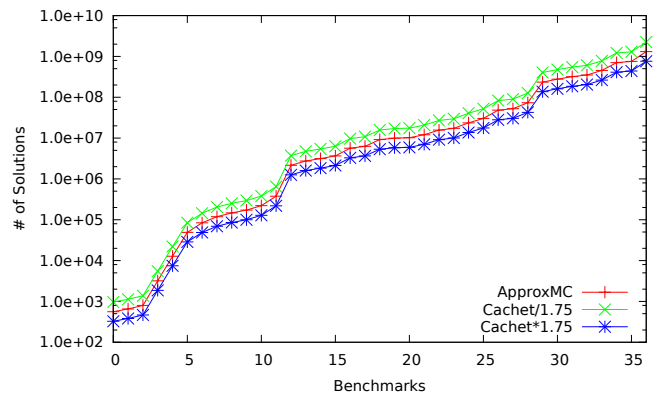}
\caption{Quality of counts computed by {\approxMC}. The benchmarks are
  arranged in increasing order of model counts.}
\label{fig:quality_comparison}
\end{figure}
\begin{figure}[h!b]
\centering
\includegraphics[scale=0.7]{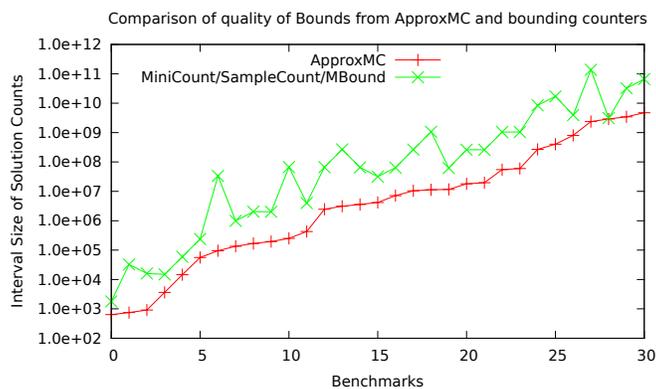}
\caption{Comparison of interval sizes from {\approxMC} and those from
  bounding counters. The benchmarks are arranged in increasing order
  of model counts.}
\label{fig:bounding_comparisons}
\end{figure}
Figure \ref{fig:quality_comparison} compares the model count computed
by {\ApproxMC} with the bounds obtained by scaling the exact count
obtained from {\Cachet} by the tolerance factor ($1.75$) on
a subset of our benchmarks. The y-axis in this figure represents the
model count on a log-scale, while the x-axis represents the
benchmarks arranged in ascending order of the model count.
The figure shows that in all cases, the count reported by {\ApproxMC}
lies within the specified tolerance of the exact count.  Although we
have presented results for only a subset of our benchmarks ($37$
in total) in Figure~\ref{fig:quality_comparison} for reasons of
clarity, the counts reported by {\ApproxMC} were found to be within
the specified tolerance of the exact counts for \emph{all} $95$
benchmarks for which {\Cachet} reported exact counts.  %
We also found that the $L_1$ norm of the relative
error, considering all $95$ benchmarks for which {\Cachet} returned
exact counts, was $0.033$.  Thus, {\ApproxMC} has approximately $4\%$
error in practice -- much smaller than the theoretical guarantee of
$75\%$ with $\varepsilon = 0.75$.
  
Figure~\ref{fig:bounding_comparisons} compares the sizes of intervals
computed using {\ApproxMC} and using state-of-the-art bounding
counters (as described in Section~\ref{sec:experiment}) on a subset of
our benchmarks.  The comparison clearly shows that the sizes of
intervals computed using {\ApproxMC} are consistently smaller than the
sizes of the corresponding intervals obtained from existing bounding
counters.  Since smaller intervals with comparable confidence
represent better approximations, we conclude that {\approxMC} computes
better approximations than a combination of existing bounding
counters. In all cases, {\approxMC} improved the upper bounds from {\MiniCount} significantly; it also improved lower bounds from {\SampleCount} and {\MBound} to a lesser extent. For details, please refer to the full version. 

\section{Conclusion and Future Work}\label{sec:discussion}
We presented {\approxMC}, the first $(\varepsilon, \delta)$
approximate counter for CNF formulae that scales in practice to tens
of thousands of variables.  We showed that {\approxMC} reports bounds
with small tolerance in theory, and with much smaller error in
practice, with high confidence. Extending the ideas in this paper to
probabilistic inference and to count models of SMT constraints is an interesting direction of future research. 

\clearpage
\bibliography{Report}
\bibliographystyle{plain}  
\clearpage
\end{document}